\newcommand{\ES}[1]{\ensuremath{\mathsf{E}\left[#1 \right]}}
\newcommand{\paren}[1]{\left(#1\right)}
\newcommand{\sqparen}[1]{\left[#1\right]}
\newcommand{\brparen}[1]{\left\{#1\right\}}
\newcommand{\e}[1]{\ensuremath{{\rm e}^{#1}}} 
\DeclareMathOperator*{\argmax}{argmax}
\renewcommand{\vec}[1]{\ensuremath{\boldsymbol{#1}}} 
\newcommand{\ie}{\ensuremath{{\text{\em i.e.}}}}
\newcommand{\eg}{\ensuremath{{\text{\em e.g.}}}}
\newtheorem{lemma}{Lemma}
\renewcommand{\vec}[1]{\ensuremath{\boldsymbol{#1}}} 
\newlength{\figwidth}
\begin{document}

\title{
Physical-Layer Security for Untrusted UAV-Assisted Full-Duplex Wireless Networks 
}

\author{
\IEEEauthorblockN{Theshani Nuradha\IEEEauthorrefmark{1}, Kasun T. Hemachandra\IEEEauthorrefmark{1}, Tharaka Samarasinghe\IEEEauthorrefmark{1}\IEEEauthorrefmark{2} and Saman Atapattu\IEEEauthorrefmark{2}\\}
\IEEEauthorblockA{
\IEEEauthorrefmark{1}Department of Electronic and Telecommunication Engineering, University of Moratuwa, Moratuwa, Sri Lanka \\
\IEEEauthorrefmark{2}Department of Electrical and Electronic Engineering, University of Melbourne, Victoria, Australia \\
\IEEEauthorblockA{
Email: \IEEEauthorrefmark{1}theshani.nuradha@gmail.com; \IEEEauthorrefmark{1}kasunh@uom.lk; \IEEEauthorrefmark{1}tharakas@uom.lk; \IEEEauthorrefmark{2}saman.atapattu@unimelb.edu.au
}
}
}
\maketitle
\begin{abstract}
 
The paper considers physical layer security (PLS) of an untrusted unmanned aerial vehicle (UAV) network, where a multitude of UAVs communicate in full-duplex (FD) mode.  A  source-based jamming (SBJ) scheme is exploited for secure communication without utilizing any external jammers. Firstly, the optimal power allocation between the confidential signal and the jamming signal is derived to maximize the secrecy rate of each link between the source and the destination. Then, the best UAV selection scheme is proposed to maximize the overall secrecy rate of the network. The corresponding secrecy outage probability (SOP) and the average secrecy rate (ASR) of the network are analyzed based on the proposed UAV selection and the optimal power allocation schemes. Asymptotic results are also obtained to derive the achievable diversity order. Results are validated through numerical evaluations while providing useful network design insights such as locations and altitudes of the UAVs. 
\end{abstract}
\begin{keywords}
Full-duplex, optimal power allocation, physical layer security, source jamming, unmanned aerial vehicle (UAV).
\end{keywords}
\section{Introduction}
Several new technologies have been developed during the past decade to to fulfill the different requirements of 5G and beyond wireless networks, e.g., massive multiple-input multiple-output (MIMO), non-orthogonal multiple access (NOMA), millimeter wave, cognitive and cooperative communications, energy harvesting, backscatter communications, Unmanned aerial vehicles (UAVs) communications, to name a few \cite{Asadpour2014magcom,zhao2018coml,atapattu2017coml}.
Among them, UAVs have been identified as a promising technique for wireless communication purposes due to their deployment flexibility. While UAVs have a broad range of applications including  military, surveillance and monitoring and delivery, UAV-assisted wireless connectivity for telecommunications has been considered as a key enabler for rescue operations,
specially when the existing infrastructure is unavailable. In emergency networks for disaster management, UAVs can act as aerial mobile relays to facilitate information exchange between affected areas and remote data centers or base stations \cite{Zhao2019wcom,MultipleUAVrelays}. 
However, UAV networks are also vulnerable to potential attacks from eavesdroppers. Thus, physical-layer security (PLS), which exploits the physical characteristics of the wireless environment, is crucial for reliable communications. 

\subsection{Related work}
For trusted UAVs, PLS has been considered in the recent research efforts \cite{Wang2017coml,Wang2018acc,Zhong2019coml,Liu2018jcn,Zhou2018tvt,Li2019coml,Tang2019ifs}. 
In \cite{Wang2017coml}, a UAV-enabled relay network under an external eavesdropper was investigated to maximize the average achievable secrecy rate by optimizing the transmit power allocation among the flight period. 
In \cite{Zhong2019coml}, a cooperative jamming scheme was introduced to secure the UAV communication by leveraging on jamming from other nearby UAVs. For multiple UAV relays, an opportunistic relaying in the presence of multiple UAV eavesdroppers was investigated in \cite{Liu2018jcn}, where the UAV-transmitter and UAV-relay pair with the highest end-to-end signal-to-noise ratio (SNR) is selected. 
Moreover, UAVs can also participate as friendly jammers for unsecured ground communication links, $\eg$, \cite{Zhou2018tvt,Li2019coml}, or as external eavesdroppers for ground communications, $\eg$, \cite{Tang2019ifs}.  
The secrecy performance of an UAV-to-UAV system was also studied under a group of UAV eavesdroppers in \cite{Ye2019wcoml}.  

\subsection{Problem statement}
It is important to note that all these research efforts are on {\it trusted half-duplex UAVs}. 
Due to the sudden and/or unplanned deployments, similar to traditional relays \cite{Yener2010it,Atapattu2019twc}, the performance of UAV networks may be disrupted by untrusted behavior of UAVs. Therefore, in this paper, we study an untrusted  UAV network. Since, with a proper self-interference cancellation, full-duplex (FD) communications achieve a higher multiplexing gain than half-duplex (HD) communications, $\eg$, \cite{Wang2018jsac}, FD communication is one of the key transmission techniques for 5G and
beyond applications. We thus consider FD UAVs as well.     
To the best of our knowledge, untrusted UAV-assisted HD or FD communications have not been investigated on the PLS perspective in the open literature, and is still an open problem. To fill in this research gap, this paper studies the PLS of an {\it untrusted FD multiple UAV} network. 
Apart from the untrusted UAVs, we assume that other malicious users such as external jammers do not affect the performance of the network. Therefore, unlike in previous work on external jamming, we consider source based jamming (SBJ) \cite{Atapattu2019coml}. 
The receiving node which may be a remote base station can also consist of multiple antennas. 

\subsection{Contributions}
Based on some fundamental results in \cite{Atapattu2019coml}, we first deduce the optimal power allocation at the ground transmitter with respect to the secrecy rate. We then consider the best UAV selection scheme in order to maximize the secrecy rate. The best single-relay selection is important because it reduces synchronization issues at the receiver while achieving a higher spatial diversity \cite{Atapattu13_jsac}. Based on single-UAV selection and the optimal power allocation schemes, we derive the secrecy outage probability (SOP) in closed-form, and also analyze the average secrecy rate (ASR) over Nakagami-$m$ multi-path fading and standard UAV path-loss model. Moreover, we provide insightful asymptotic results to further investigate the PLS performance of an UAV-assisted network.



\section{System Model}\label{s_sys}
\begin{figure}[h]
	\centering
    \includegraphics[width=0.5\figwidth]{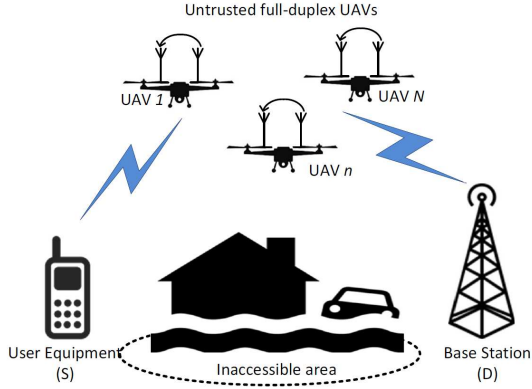}
	\caption{A wireless network with single untrusted FD UAV.}
	\label{f_sysmod}
\end{figure}
We consider a UAV-assisted wireless communication network. A set of $R$ untrusted UAVs ($\mathcal{U}$), whose locations are fixed, is deployed as aerial relays to support uplink communication for user equipment (UE). We consider an arbitrary UE, hereafter referred to as the source node ($\mathcal{S}$), whose uplink communication is blocked by obstacles as shown in Fig.~\ref{f_sysmod}. For example, consider a scenario where terrestrial infrastructure is destroyed due to a natural disaster. $\mathcal{S}$ may still access the core network through a far away base station, hereafter referred to as the destination ($\mathcal{D}$), with the aid of the UAVs. Since such an ad-hoc network is not pre-planned, the supportive UAVs may not be trustworthy. Therefore, we assume that the UAVs assist communication as amplify-and-forward (AF) relays. To improve the throughput, the relays operate in the in-band full-duplex (FD) mode. A single transmit antenna is used by $\mathcal{S}$, while $\mathcal{D}$ is equipped with a maximal-ratio combiner (MRC) with $N$ receive antennas. The UAVs are equipped with an antenna for transmission and an antenna for reception.

With SBJ, $\mathcal{S}$ transmits a composite signal containing the confidential signal $x_s$ and a jamming signal $x_j$. $P_s$ is the power budget of $\mathcal{S}$ and $a\in [0,1]$ is the power allocation ratio. Thus, the power in $x_s$ and $x_j$ is given by $a P_s$ and $(1-a)P_s$, respectively. Furthermore, both signals are of unit average energy. We assume that $\mathcal{D}$ has full knowledge of $x_j$, and full channel state information (CSI), {\em i.e.}, for each $i \in \mathcal{U}$, the CSI of the link between $\mathcal{S}$ and $i$, and the CSI of the link between $i$ and $\mathcal{D}$. This enables jamming signal cancellation and MRC based detection. We assume that the UAVs do not hinder the CSI acquisition process \cite{Xiong2016twc}.


For UAV $i \in \mathcal{U}$, let $S_i$ denote the link between $\mathcal{S}$ and $i$, and let $D_i$ denote the link between $i$ and $\mathcal{D}$. We use $|\cdot|$ to represent the spatial distance between the transmitter and the receiver of such a link. The fading channel amplitudes of these links are modeled as independent and identically distributed Nakagami-$m$ random variables (RVs) \cite{NakagamiM}. Similarly, for UAV $i \in \mathcal{U}$, $H_i$ denotes the altitude, and $d_{S,i}$  and $d_{D,i}$ denote the distances from $\mathcal{S}$ and $\mathcal{D}$ to the projection of $i$ on the ground plane, respectively. 
Following \cite{ChannelModelUAV}, the average  path loss for a ground to UAV (or UAV to ground) link of length $r$ is given by
 $\bar{l}_{r}=r^{\alpha}\paren{4\pi f/c}^{2}\paren{\eta_{\rm L}p_{\rm L}+\eta_{\rm N}p_{\rm N}}$,
where $
  p_{L}=1/\sqparen{1+\omega\textnormal{ exp}\paren{-\beta\sqparen{\theta-\omega}}},
 p_{\rm N}=1-p_{\rm L},
  \theta =\frac{180}{\pi}\arctan\paren{\frac{H_i}{d_i}}, r=\sqrt{H_i^{2}+d_i^{2}}; d_i\in\{d_{S,i},d_{D,i}\}; $
$f$ is the carrier frequency, $c=3\times 10^8$ m/s, $\alpha$ is the path loss exponent, $\eta_{\rm L}$, $\eta_{\rm N}$, $\omega$, and $\beta$ are constants that depend on factors such as blockage density, height and the density of surrounding buildings. 

At time $\tau$, the received signal at UAV $i \in \mathcal{U}$ is given by
\begin{equation}\label{eq:rx UAV}
\begin{split}
    y_{i}(\tau)=\sqparen{\sqrt{\frac{a P_s}{\bar{l}_{|S_i|}}}x_s(\tau)+\sqrt{\frac{(1-a) P_s}{\bar{l}_{|S_i|}}}x_j(\tau)}h_{S_i} \\ +I(\tau)+n_{i}(\tau),
\end{split}
\end{equation}
where $h_{S_i}$ denote the fading channel coefficient of link $S_i$, $n_{i}(\tau)$ is the additive white Gaussian noise (AWGN) with variance $\sigma_u^2$, and $I(\tau)$ is the residual self-interference (RSI). The RSI is assumed to be independent of other signals, and follows a Gaussian distribution with zero-mean and variance $\sigma_I^2$ \cite{Ngoc2014jsac}. 

Assuming variable gain AF relaying, the amplification gain of $i \in \mathcal{U}$ is given by 
$G_i =\sqrt{P_u}\sqparen{P_s|h_{S_i}|^2/\bar{l}_{|S_i|} + \sigma_I^2 + \sigma_{u}^2}^{-1/2}$,
where $P_{u}$ is the transmit power of the UAV. 
Thus, the transmit signal from $i$ is $G_i y_{i}(\tau-\Delta)$, where $\Delta$ is the processing delay at $i$ \cite{Le2015coml}, and the corresponding received signal vector at $\mathcal{D}$ is given by   
\begin{equation}\label{eq:rx D}
    \vec{y}_{D_i}(\tau)=\frac{G_i}{\sqrt{\bar{l}_{|D_i|}}}y_{i}(\tau-\Delta){\vec{h}}_{D_i}+{\vec{n}}_{D}(\tau)
\end{equation}
where ${\vec{h}}_{D_i}$ denotes the fading channel coefficient vector of link $D_i$, and ${\vec{n}}_{D}(\tau)$ is the AWGN vector at $\mathcal{D}$ with variance $\sigma_D^2$. 

We define the interference-to-noise ratio (INR) at $i \in \mathcal{U}$ as $\Gamma_I\triangleq\sigma_I^2/\sigma_u^2$. The signal-to-interference-plus-noise ratios (SINRs) at $i$ and $\mathcal{D}$ can then be given, respectively, as 
\begin{equation}\label{e_sinrR}
\hspace{-1mm}\tilde\Gamma_{i}(a) = \frac{ a\Gamma_{S_i} }{(1-a)\Gamma_{S_i} +1}
\text{ and }
\tilde\Gamma_{\rm D_i}(a) = \frac{a\Gamma_{S_i}\Gamma_{D_i}}{\Gamma_{S_i}+\Gamma_{D_i}+1},
\end{equation}
where $\Gamma_{S_i}$ and $\Gamma_{D_i}$  are the SNRs of $S_i$ and $D_i$ links, respectively, given by
\begin{equation}
\Gamma_{S_i}\triangleq \frac{P_s |h_{S_i}|^2 }{\bar{l}_{|S_i|}\sigma_u^2(1+\Gamma_I)}
\text{ and } 
\Gamma_{D_i}\triangleq \frac{P_u}{\bar{l}_{|D_i|}\sigma_d^2}\sum_{\ell=1}^{N}|h_{\ell}|^2,
\label{Gamma-def}
\end{equation}
where $h_{\ell}$ is the $\ell$-th element of $\vec{h}_{D_i}$.

Following the Nakagami-$m$ fading model, $\sigma_{S_i}^2$ and $\sigma_{D_i}^2$ are the respective scale parameters of $S_i$ and $D_i$ channels with $m$ as the shape parameter. When $m$ is an integer, the fading power is Gamma distributed, and hence, RV $\Gamma_{S_i}$ follows a Gamma distribution with shape parameter $m$ and scale parameter $\bar\gamma_{S_i} \triangleq P_s \sigma_{S_i}^2/\paren{\sigma_u^2\bar{l}_{|S_i|}(\Gamma_I+1)} $. Similarly, RV $\Gamma_{D_i}$ also follows a Gamma distribution with shape parameter $Nm$ and scale parameter $\bar\gamma_{D_i} \triangleq P_u \sigma_{D_i}^2/\paren{\sigma_d^2\bar{l}_{|D_i|}}$. 
The symbols $f_X$, and $F_X$ are used to denote the probability density function (PDF) and the CDF of an RV $X$, respectively.

 
 The instantaneous achievable secrecy rate using $i\in \mathcal{U}$ as a relay can then be expressed as 
\begin{equation}\label{e_isr}
C_{i}(a) = \left[\ln(1+\tilde\Gamma_{\rm D_i}(a)) - \ln(1+\tilde\Gamma_{\rm i}(a)) \right]^+,
\end{equation}
where $[x]^+\triangleq\max\{0,x\}$ \cite{Atapattu2019coml}.

\section{Secrecy Performance Analysis With UAV Selection}\label{secrecy_perform}

Single relay selection is a widely used technique to improve the spectral efficiency, and the end-to-end (e2e) latency of relay networks, while achieving full diversity order \cite{Atapattu13_jsac}. Furthermore, relay selection reduces the synchronization issues compared to multiple-relay communications. Therefore, we employ relay selection, such that a single UAV is selected as the relay to assist communication between $\mathcal{S}$ and $\mathcal{D}$. Here, we select the UAV that maximizes the instantaneous secrecy rate of the system. 
The power allocation ratio $a$ is a key factor determining the secrecy rate of the system.
Therefore, firstly, we use the results in \cite{Atapattu2019coml} to deduce the optimal power allocation ratio for our setup.
To this end, the optimal power allocation ratio $a^*$ that maximizes the system secrecy rate is given by \begin{equation*}
a^*=\argmax_{0\leq a \leq 1}\,\,C_{i}(a) 
 = \argmax_{0\leq a \leq 1}\,\,\max\{\Psi(a),1\},
\end{equation*}
where $\Psi(a)\triangleq(1+\tilde{\Gamma}_{\rm D_i}(a))/(1+\tilde{\Gamma}_{\rm i}(a))$, and the second equality is written using the monotonicity of the logarithm function. Following \cite{Atapattu2019coml}, when $\Gamma_{D_i} < 1+1/\Gamma_{S_i}$, $\ie$, when $S_i$ or $D_i$ channels are encountering deep fading, $C_{i}(a)=0$ for any $a\in[0,1]$. This implies that the transmission is futile and $\mathcal{S}$ should be kept idle to save energy.
When this condition does not hold, the optimal power allocation ratio is given by
\begin{equation}
\label{e_rs}
a^*= \frac{1}{2}\left(1-\frac{1+\Gamma_{S_i}}{\Gamma_{S_i}\Gamma_{D_i}}\right),
\end{equation} 
which leads to  
\begin{equation}\label{e_opta}
\begin{split}
\tilde\Gamma_{\rm i}^* = \frac{\Gamma_{S_i} (\Gamma_{D_i}-1)-1}{\left(\Gamma_{S_i}+2\right) \Gamma_{D_i}+\Gamma_{S_i}+1},\,\, 
\tilde\Gamma_{\rm D_i}^* = \frac{\Gamma_{S_i} (\Gamma_{D_i}-1)-1}{2 \left(\Gamma_{D_i}+\Gamma_{S_i}+1\right)}, 
\end{split}
\end{equation}
where $\tilde\Gamma_{\rm i}^* =\tilde{\Gamma}_i(a^*)$ and $\tilde\Gamma_{\rm D_i}^* =\tilde{\Gamma}_{D_{i}}(a^*)$. 
It is not hard to show that $\tilde\Gamma_{\rm i}^*<1$, which implies that the achievable secrecy rate at the UAV is low. On the other hand, $\tilde\Gamma_{\rm D_i}^*$ is unbounded implying that with preferable channel realizations, $\mathcal{D}$ can achieve high secrecy rates. This also means $\tilde\Gamma_{\rm D_i}^*-\tilde\Gamma_{\rm i}^* \geq 0$ is possible, which implies the possibility of secure information transmission.

Once the optimal power allocation ratios for the links through all UAVs are determined, the UAV which maximizes the e2e secrecy rate, $i^*$, can be found as
\begin{equation}
 i^*=\argmax_{1\leq i \leq R}\,\,C_{i}(a^*).   
\end{equation}

Next, we analyze the achievable performance with the selected UAV. For this, we evaluate the SOP and the ASR of the system.
\subsection{SOP Analysis}
\label{sop_analysis}
A secrecy outage event (hereafter referred to as outage) occurs when $C_{i}^*(a^*)=0$, where $C_{i}^*(a^*)$ is the secrecy rate achievable with the best ($i^*$) UAV \cite{Atapattu2019coml}. It is important to note that without power allocation between $x_s$ and $x_j$, $\ie$, when $a=1$ or $a=0$, the system will always be in outage. For an example, when $a=1$, one can observe that $\tilde\Gamma_{i}> \tilde\Gamma_{D_i}$, since $\Gamma_{S_i}$ and $\Gamma_{D_i}$ are always greater than or equal to zero. This makes $\Psi(1)<1$, and $C_{i}(1)=0$. Therefore, it is essential to employ SBJ with a proper power allocation scheme. 

Using SBJ with the optimal power allocation, the SOP of the proposed UAV selection scheme is given by
\begin{align}
      \nonumber   SOP^*&=\Pr\brparen{C_{i}^*(a^*)\leq 0}=\Pr\brparen{\max_i{(C_{i}(a^*))} \leq 0}\\
    \nonumber    &\stackrel{(a)}{=} \prod_{i=1}^R \Pr\brparen{C_{i}(a^*) \leq 0} \\
        &\stackrel{(b)}{\approx} \sqparen{\Pr\brparen{C_{i}(a^*) \leq 0}}^{R}= \sqparen{SOP_i^*}^{R}.
        \label{joint}
\end{align}
where $(a)$ follows from the independence of the fading channel coefficients of all links, and in $(b)$, we have assumed that the $R$ UAVs are located in a manner that the average path loss values from $\mathcal{S}$ and $\mathcal{D}$ to each of the UAVs are approximately equal. $SOP_i^*$ is the SOP of the link via the $i$-th UAV.   
Using SBJ with the optimal power allocation scheme, we have already established that the source device will be idle if $\Gamma_{D_i} < 1+1/\Gamma_{S_i}$. It is not hard to see that the probability of this event captures the SOP of a link via the $i$-th UAV. An analytical expression for this probability is presented through the following lemma.
 \begin{lemma}
\label{sop_i}
The SOP of the link via the $i$-{th} UAV is given by
\begin{equation} \label{sop_i_eq}
\begin{split}
    & SOP_i^*= 1- \hspace{-1mm} \frac{2 \e{-\frac{1}{\bar\gamma_{D_i}}}}{\Gamma(m)} \hspace{1mm}
   \\ & \times 
    \sum_{k=0}^{Nm-1}\sum_{\ell=0}^{k} \frac{ \paren{\bar{\gamma}_{S_i}}^{\frac{\ell-k-m}{2}}K_{m-k+\ell}\paren{\frac{2}{\sqrt{\bar\gamma_{S_i}\bar\gamma_{D_i}}}}}{\paren{\bar{\gamma}_{D_i}}^{\frac{\ell+k+m}{2}} \paren{k-\ell}! \hspace{1mm} \ell!}
\end{split}
\end{equation}
where $K_{n}\left(\cdot\right)$ is the modified Bessel function of the second kind.
\end{lemma}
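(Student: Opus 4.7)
The plan is to reduce the lemma to a conditional expectation and then invoke a classical Bessel-function integral identity. The outage condition established just before the lemma is $\Gamma_{D_i} < 1 + 1/\Gamma_{S_i}$, so $SOP_i^* = \PR{\Gamma_{D_i} < 1 + 1/\Gamma_{S_i}}$. Conditioning on $\Gamma_{S_i}=x$ and averaging against its Gamma$(m,\bar\gamma_{S_i})$ density gives
\begin{equation*}
SOP_i^* = \int_0^\infty F_{\Gamma_{D_i}}\!\paren{1+1/x}\, f_{\Gamma_{S_i}}(x)\,\dif x.
\end{equation*}

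Next, I would substitute the integer-shape Gamma CDF $F_{\Gamma_{D_i}}(t)=1-\sum_{k=0}^{Nm-1}\frac{t^k \e{-t/\bar\gamma_{D_i}}}{k!\,\bar\gamma_{D_i}^k}$ evaluated at $t=1+1/x$, split $\e{-(1+1/x)/\bar\gamma_{D_i}}=\e{-1/\bar\gamma_{D_i}}\e{-1/(x\bar\gamma_{D_i})}$, and apply the binomial expansion $\paren{1+1/x}^k=\sum_{\ell=0}^{k}\binom{k}{\ell}x^{-\ell}$. The leading $1$ in the CDF integrates against $f_{\Gamma_{S_i}}$ to give $1$, which supplies the leading $1$ in the stated formula. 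What remains is a double sum in $(k,\ell)$ of one-dimensional integrals of the form $\int_0^\infty x^{m-1-\ell}\,\e{-x/\bar\gamma_{S_i}-1/(x\bar\gamma_{D_i})}\,\dif x$.

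These integrals are exactly of the classical Bessel type
\begin{equation*}
\int_0^\infty x^{\nu-1}\e{-\alpha x - \beta/x}\,\dif x = 2\paren{\beta/\alpha}^{\nu/2}K_\nu\!\paren{2\sqrt{\alpha\beta}},
\end{equation*}
valid for $\alpha,\beta>0$ and any $\nu\in\R$ (via $K_{-\nu}=K_\nu$). Setting $\nu=m-\ell$, $\alpha=1/\bar\gamma_{S_i}$, $\beta=1/\bar\gamma_{D_i}$ produces a modified Bessel function of the argument $2/\sqrt{\bar\gamma_{S_i}\bar\gamma_{D_i}}$ together with the powers of the two average SNRs. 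Collecting all the constants from the PDF and the CDF, the $(k,\ell)$ contribution becomes $\frac{2\e{-1/\bar\gamma_{D_i}}}{\Gamma(m)\,\ell!\,(k-\ell)!}\,\bar\gamma_{S_i}^{-(m+\ell)/2}\,\bar\gamma_{D_i}^{-(2k+m-\ell)/2}\,K_{m-\ell}\!\paren{2/\sqrt{\bar\gamma_{S_i}\bar\gamma_{D_i}}}$.

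The one genuine obstacle is cosmetic: the Bessel order that falls out naturally is $m-\ell$, whereas the lemma displays $m-k+\ell$, and the exponents of $\bar\gamma_{S_i}$ and $\bar\gamma_{D_i}$ differ accordingly. I would close the proof by a simple change of summation index $\ell\mapsto k-\ell$ in the inner sum, under which $\bar\gamma_{S_i}^{-(m+\ell)/2}\to\bar\gamma_{S_i}^{(\ell-k-m)/2}$, $\bar\gamma_{D_i}^{-(2k+m-\ell)/2}\to\bar\gamma_{D_i}^{-(\ell+k+m)/2}$, the factorials swap to $(k-\ell)!\,\ell!$, and $K_{m-\ell}\to K_{m-k+\ell}$, reproducing \eqref{sop_i_eq} exactly. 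Apart from this re-indexing and routine tracking of exponents, no further analysis is required.
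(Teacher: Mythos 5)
Your proposal is correct and follows essentially the same route as the paper's proof: writing $SOP_i^*=\int_0^\infty F_{\Gamma_{D_i}}(1+1/t)f_{\Gamma_{S_i}}(t)\,\dif t$, substituting the integer-shape Gamma CDF and PDF, applying the binomial expansion, and invoking the identity $\int_{0}^{\infty} x^{v-1}\e{-a/x-bx}\,\dif x=2(a/b)^{v/2}K_{v}(2\sqrt{ab})$. The only difference is your choice of binomial convention, which you correctly repair with the re-indexing $\ell\mapsto k-\ell$; the paper obtains the stated indices directly by expanding $(1+1/x)^k$ in powers of $x^{\ell-k}$.
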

\begin{proof}
Since $ SOP_i^*=\Pr [ \Gamma_{D_i} < 1+\textstyle \frac{1}{\Gamma_{S_i}}]$, we have
\begin{align}
\nonumber & SOP_i^*  
= \hspace{-1mm}\int_{0}^{\infty} \hspace{-1mm} F_{\Gamma_{D_i}}\paren{ 1+\frac{1}{t}} f_{\Gamma_{S_i}}(t) dt \\\label{e_optsop}
\nonumber & \stackrel{(a)}{=} 1-  \hspace{-1mm}\sum_{k=0}^{Nm-1}\frac{\e{-\frac{1}{\bar\gamma_{D_i}}} }{k! \bar\gamma_{D_i}^{k} } \int_{0}^{\infty}\hspace{-1mm} \paren{ 1+\frac{1}{t}}^k \hspace{-1mm} \frac{t^{m-1}}{\Gamma(m)\bar\gamma_{S_i}^m} \hspace{1mm} \e{-\frac{1}{\bar\gamma_{D_i} t} -\frac{t}{\bar\gamma_{S_i}}} dt
\end{align}
where (a) follows from the substitution of the respective PDF and CDF of $\Gamma_{S_i}$ and $\Gamma_{D_i}$.
Applying the identity $\int_{0}^{\infty} x^{v-1}e^{-a/x-bx} dx=2 \left(a/b\right)^{\frac{v}{2}} K_{v}(2 \sqrt{a b})$ together with the binomial expansion 
completes the proof.
\end{proof}
Substituting (\ref{sop_i_eq}) in (\ref{joint}), gives us the SOP of the system.

\subsection{ASR Analysis}
\label{asr_analysis}
The ASR of a system is another important measure of the secrecy performance. The ASR is defined as the probabilistic average of the instantaneous secrecy rate of the system.  For the system model under consideration, with optimal power allocated SBJ and secrecy rate maximizing relay selection, the ASR can be expressed as
\begin{equation}{\label{Csro}}
        \bar{C} =\ES{C_{i}^*(a^*)}=\ES{\max_{1\leq i\leq R}\sqparen{\log\paren{\frac{1+\tilde\Gamma_{D_i}^*}{1+\tilde\Gamma_{i}^*}}}}. 
\end{equation}
To simplify (\ref{Csro}), from (\ref{e_opta}) one can observe that 
the ratio $\frac{\tilde\Gamma_{D_i}^*}{\tilde\Gamma_{i}^*}=\tilde\Gamma_{D_i}^*+1$. Substituting this relationship in (\ref{Csro}), gives us
\begin{equation*}
\bar{C}=\ES{\log\paren{1+\max_{1\leq i\leq R}\sqparen{\frac{\paren{\tilde\Gamma_{D_i}^*}^{2}}{2\tilde\Gamma_{D_i}^*+1}}}}.
\end{equation*}
To evaluate the expectation, let $\max\paren{z_i}=Y$ where
$z_{i}=\frac{\paren{\tilde\Gamma_{D_i}^*}^{2}}{2\tilde\Gamma_{D_i}^*+1}$. Thus, from integration by parts, 
\begin{equation}
\label{avg_asr}
 \bar{C}=\int_0^\infty \frac{1-F_Y(y)}{1+y} dy,   
\end{equation}
where $F_{Y}(y)=\sqparen{F_{z_i}(y)}^{R}$ using the i.i.d. assumption made earlier.
Furthermore, 
\begin{equation*}
\label{cdf_si}
    \begin{split}
      F_{z_i}(y) &=\Pr\brparen{ \frac{\paren{\tilde\Gamma_{D_i}^*}^{2}}{2\tilde\Gamma_{D_i}^*+1} \leq y}
      =\Pr\brparen{\paren{\tilde\Gamma_{D_i}^*-y}^{2} \leq y^{2}+y}\\
        &=F_{\tilde\Gamma_{D_i}^*}\paren{y+\sqrt{y^{2}+y}}.
    \end{split}
\end{equation*}
Thus, from (\ref{avg_asr}), we get
\begin{equation}
\label{asr_simple}
\bar{C} =  \int_{0}^{\infty} \frac{1-\sqparen{F_{\tilde\Gamma_{D_i}^*}\paren{y+\sqrt{y^{2}+y}}}^{R}}{\paren{1+y}}dy.    
\end{equation}
We obtain an expression for the CDF $F_{\tilde\Gamma_{D_i}^*}$ through the following lemma.
\begin{lemma}
\label{lemma2}
The CDF of RV $\tilde\Gamma_{D_i}^*$ is given by
\begin{align}
\label{e_cdfGD1}
\nonumber & F_{\tilde\Gamma_{\rm D_i}^*}(t) 
=1-\frac{2\e{ - 2t\paren{\frac{1}{\bar\gamma_{S_i}}+\frac{1}{\bar\gamma_{D_i}}}}}{\e{\frac{1}{\bar\gamma_{D_i}} } \Gamma(m)} 
\sum _{k=0}^{Nm-1} \sum _{\ell=0}^k \sum _{r=0}^{m-1}(2t)^{m-1-r}  \\ &\times(2t+1)^{k+r+1}\frac{ \binom{k}{\ell} \binom{m-1}{r} K_{r-\ell+1}\left(\frac{2(2 t+1)}{\sqrt{\bar\gamma_{S_i}\bar\gamma_{D_i}} }\right)}{k!\paren{\bar{\gamma}_{D_i}}^{k+\frac{r-\ell+1}{2}} \paren{\bar{\gamma}_{S_i}}^{m-\frac{r-\ell+1}{2}}}.
\end{align}
\end{lemma}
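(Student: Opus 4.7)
The plan is to attack $F_{\tilde\Gamma_{D_i}^*}(t)=\Pr\{\tilde\Gamma_{D_i}^*\le t\}$ directly from the closed-form expression \eqref{e_opta}, reducing it to a one-dimensional integral over $\Gamma_{S_i}$ in which $\Gamma_{D_i}$ has been integrated out via its Gamma CDF, and then evaluating that remaining integral with the same Bessel-function identity used in Lemma \ref{sop_i}.

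First I would rearrange the event $\{\tilde\Gamma_{D_i}^*\le t\}$ using \eqref{e_opta}. Writing $\Gamma_{S_i}(\Gamma_{D_i}-1)-1\le 2t(\Gamma_{D_i}+\Gamma_{S_i}+1)$ and collecting terms in $\Gamma_{D_i}$ yields $(\Gamma_{S_i}-2t)\Gamma_{D_i}\le (1+2t)(\Gamma_{S_i}+1)$. I would then split on the sign of $\Gamma_{S_i}-2t$: for $\Gamma_{S_i}\le 2t$ the inequality holds automatically (the right side is positive and the left non-positive), while for $\Gamma_{S_i}>2t$ it reduces to $\Gamma_{D_i}\le \frac{(1+2t)(\Gamma_{S_i}+1)}{\Gamma_{S_i}-2t}$. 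Conditioning on $\Gamma_{S_i}$ and using independence gives
\begin{equation*}
F_{\tilde\Gamma_{D_i}^*}(t)=\Pr\{\Gamma_{S_i}\le 2t\}+\int_{2t}^{\infty}F_{\Gamma_{D_i}}\!\left(\tfrac{(1+2t)(x+1)}{x-2t}\right)f_{\Gamma_{S_i}}(x)\,dx.
\end{equation*}

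Next I would perform the substitution $u=x-2t$, which converts the awkward argument of $F_{\Gamma_{D_i}}$ into the clean form $(2t+1)+\frac{(2t+1)^2}{u}$. Plugging in the Gamma CDF of $\Gamma_{D_i}$ with shape $Nm$ and the Gamma PDF of $\Gamma_{S_i}$ with shape $m$, the contribution from the ``$1$'' in $F_{\Gamma_{D_i}}$ integrates to $\Pr\{\Gamma_{S_i}>2t\}$, which combines with $\Pr\{\Gamma_{S_i}\le 2t\}$ to produce the leading $1$ in \eqref{e_cdfGD1}. The remaining piece is a triple sum produced by (i) the summation index $k$ in the CDF of $\Gamma_{D_i}$, (ii) a binomial expansion of $\bigl[(2t+1)+\tfrac{(2t+1)^2}{u}\bigr]^k=\sum_{\ell=0}^{k}\binom{k}{\ell}(2t+1)^{k+\ell}u^{-\ell}$, and (iii) a binomial expansion of $(u+2t)^{m-1}=\sum_{r=0}^{m-1}\binom{m-1}{r}(2t)^{m-1-r}u^{r}$ coming from $f_{\Gamma_{S_i}}(u+2t)$. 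After these expansions every term reduces to an integral of the form $\int_{0}^{\infty}u^{r-\ell}\exp\!\bigl(-u/\bar\gamma_{S_i}-(2t+1)^{2}/(\bar\gamma_{D_i}u)\bigr)\,du$.

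Finally I would apply the identity $\int_{0}^{\infty} x^{v-1}e^{-a/x-bx}\,dx=2(a/b)^{v/2}K_{v}(2\sqrt{ab})$ with $v=r-\ell+1$, $a=(2t+1)^{2}/\bar\gamma_{D_i}$, $b=1/\bar\gamma_{S_i}$. This produces the prefactor $(2t+1)^{r-\ell+1}$ that combines with the $(2t+1)^{k+\ell}$ from step (ii) to give the $(2t+1)^{k+r+1}$ appearing in \eqref{e_cdfGD1}, and likewise produces the asymmetric powers of $\bar\gamma_{S_i}$ and $\bar\gamma_{D_i}$ shown in the statement. Collecting the exponentials $e^{-(2t+1)/\bar\gamma_{D_i}}$ (from the CDF of $\Gamma_{D_i}$ at the constant part) and $e^{-2t/\bar\gamma_{S_i}}$ (from $f_{\Gamma_{S_i}}(u+2t)$ evaluated at the constant part) yields the $\exp\{-2t(1/\bar\gamma_{S_i}+1/\bar\gamma_{D_i})\}\exp\{-1/\bar\gamma_{D_i}\}$ factor. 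The main obstacle is purely bookkeeping: correctly pairing the three binomial indices $k,\ell,r$ with the Bessel order and the fractional exponents of $\bar\gamma_{S_i},\bar\gamma_{D_i}$ so that the final triple sum matches \eqref{e_cdfGD1} exactly; there is no additional analytic difficulty beyond the case split at $\Gamma_{S_i}=2t$.
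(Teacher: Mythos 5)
Your proposal is correct and follows essentially the same route as the paper: both reduce the event to $F_{\Gamma_{S_i}}(2t)+\int_{2t}^{\infty}F_{\Gamma_{D_i}}\paren{\tfrac{(2t+1)(1+x)}{x-2t}}f_{\Gamma_{S_i}}(x)\,dx$ and evaluate it via the shift $u=x-2t$, two binomial expansions, and the identity $\int_{0}^{\infty}x^{v-1}e^{-a/x-bx}dx=2(a/b)^{v/2}K_{v}(2\sqrt{ab})$. The only cosmetic difference is that the paper separates out the atom $\Pr\sqparen{\tilde\Gamma_{D_i}^*=0}=\Pr\sqparen{\Gamma_{D_i}\le 1+1/\Gamma_{S_i}}$ before splitting on $\Gamma_{S_i}\lessgtr 2t$, whereas you absorb that deep-fading region directly into the inequality $(\Gamma_{S_i}-2t)\Gamma_{D_i}\le(1+2t)(\Gamma_{S_i}+1)$; the resulting integral is identical.
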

\begin{proof}
For $t\ge 0$, we can write
{\small
\begin{align}
\nonumber &F_{\tilde{\Gamma}_{\rm D_i}^*}(t) 
=\Pr\sqparen{\tilde{\Gamma}_{D_i}^*=0}+\Pr\sqparen{0<\tilde{\Gamma}_{D_i}^*\le t}\cr
&=\Pr\sqparen{\Gamma_{D_i}\leq 1+\frac{1}{\Gamma_{S_i}}} \cr 
& \quad +\Pr\sqparen{\Gamma_{D_i} > \frac{(2t+1)(1+\Gamma_{S_i})}{(\Gamma_{S_i}-2t)}, \Gamma_{S_i}<2t , \Gamma_{D_i}> 1+\frac{1}{\Gamma_{S_i}}} \\
\nonumber &\quad+ \Pr\sqparen{\Gamma_{D_i} \leq \frac{(2t+1)(1+\Gamma_{S_i})}{(\Gamma_{S_i}-2t)}, \Gamma_{S_i}> 2t , \Gamma_{D_i}> 1+\frac{1}{\Gamma_{S_i}}}\\\label{e_cdfGD11}  \nonumber &=F_{\Gamma_{S_i}}(2t)+\int_{2t}^{\infty} F_{\Gamma_{D_i}}\paren{\frac{\paren{2t+1}\paren{1+x}}{\paren{x-2t}}} f_{\Gamma_{S_i}}(x)dx.
\end{align}
}
Substituting the respective PDF and CDF of $\Gamma_{S_i}$ and $\Gamma_{D_i}$ 
and solving the integral applying the transformation $y=x-2t$ and the binomial expansion, yields (\ref{e_cdfGD1}). 
\end{proof} Substituting (\ref{e_cdfGD1}) in (\ref{asr_simple}) gives us an expression for ASR, which can be evaluated using numerical integration techniques.

\subsection{High SNR SOP Analysis}
\label{high_snr}
The derived SOP and ASR expressions are useful in obtaining exact performance values of the system. However, for certain applications, it is useful to obtain direct insights on the system behaviour in the high SNR regime. Hence, we derive approximations for the SOP in the high SNR regime and evaluate the diversity order of the system.
For the high SNR regime, we let $\bar\gamma=P_{s}/\sigma_u^2=\delta P_{u}/\sigma_d^2 \gg1$.

When $N=1$ and $m=1$, using the relation $x K_{1}(x)\approx 1$ for small $x$, (\ref{joint}) can be approximated as $SOP^*\approx (1-e^{-\frac{1}{\bar\gamma_{D_i}}})^{R}$,  giving a diversity order of $R$ for the system.
For an arbitrary integer $m$, when $N \geq 2$, $SOP_i^*$ can be lower bounded as
\begin{equation}
\label{asyout}
  \begin{split}
      SOP_i^*&\geq\max\brparen{\Pr\paren{\Gamma_{D_i}<1},\Pr\paren{\Gamma_{D_i}<\frac{1}{\Gamma_{S_i}}}} \\
&=\max \left\{\frac{\bar\gamma^{-Nm}(\delta \bar{l}_{|D_i|})^{Nm}}{(Nm)!},\right.\\
&\quad \quad \left.\paren{\frac{\delta \bar{l}_{|S_i|} \bar{l}_{|D_i|} (1+\Gamma_I)}{\bar\gamma^{2}}}^{m} \frac{\Gamma((N-1)m)}{m!(Nm-1)!}\right\},
  \end{split}  
\end{equation}
where the Taylor series expansions of $F_{\Gamma_{D_i}}(t)$ and $f_{\Gamma_{D_i}}(t)$ are used to compute the probabilities.
Furthermore, $SOP_i^*$ can be upper bounded as
\begin{equation}
\label{asyout2}
    \begin{split}
       SOP_i^*&\le\Pr\paren{\Gamma_{D_i}<2}+\Pr\paren{\Gamma_{D_i}<\frac{2}{\Gamma_{S_i}}}\\
        &=\paren{\frac{2\delta \bar{l}_{|S_i|} \bar{l}_{|D_i|} (1+\Gamma_I)}{\bar\gamma^{2}}}^{m} \frac{\Gamma((N-1)m)}{m!(Nm-1)!}\\&\quad \quad \quad \quad \quad \quad \quad \quad 
        +\frac{\bar\gamma^{-Nm}(2\delta \bar{l}_{|D_i|})^{Nm}}{(Nm)!}.
    \end{split}
\end{equation}

From \eqref{asyout} and \eqref{asyout2}, one can deduce that the diversity order of $2m$ can be achieved with a single UAV. 
Therefore, by selecting the best UAV out of $\mathcal{U}$ results in a diversity order of $2Rm$, when $N\geq 2$. It is interesting to note that the diversity order is independent of $N$.
\section{Numerical Results and Discussion}\label{s_num}

In this section, we present numerical results for validation of the theoretical results, and to draw insights to provide recommendations on UAV network design for secure communications. 
Radio propagation parameters ($\omega, \beta, \eta_{L} ({\small \rm dB}), \eta_{N} ({\small \rm dB})$) for the urban environmental conditions are given as (9.61, 0.16, 1, 20) \cite{UAV_environmentalParameters} and $P_s =P_u$.
It is assumed that $\Gamma_I=1$ in the numerical results.

\begin{figure}
    \centering{\includegraphics[scale=0.4]{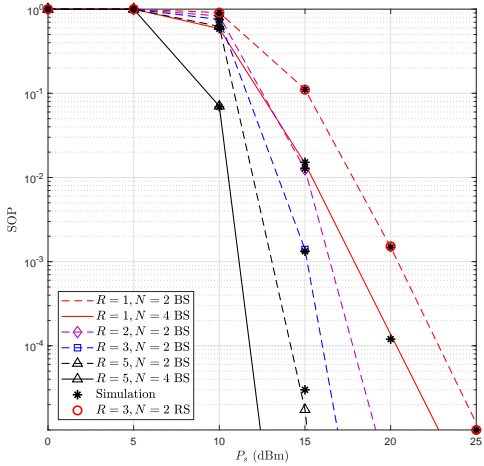}} \caption{The SOP vs. $P_{s}$ in an urban environment for SBJ with $m=2$.} \vspace{-0.2cm}  \label{Fig:sopASN}
    \end{figure}
\begin{figure}
    \centering{\includegraphics[scale=0.4]{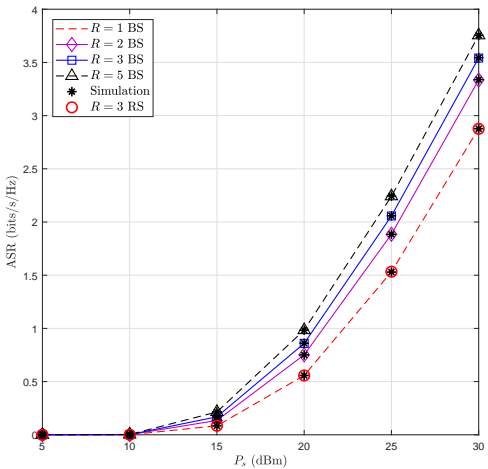}} \caption{The ASR vs. $P_{s}$ in an urban environment for $ N=2,$ and $m=2$.} \vspace{-0.2cm}  \label{Fig:asrAS}
\end{figure}

    

Firstly, Figs. \ref{Fig:sopASN} and \ref{Fig:asrAS} present the SOP and ASR performance vs. $P_s$ for different values of $R$ and $N$. In the legends, BS refers to the best UAV selection scheme, while RS refers to random UAV selection. One can clearly observe the excellent agreement of analytical results with simulation results, confirming their accuracy. Furthermore, the slopes of the SOP curves in the high SNR regime remains the same for $N=2$ and $N=4$, when $R$ is fixed. This confirms our observation of the independence between the diversity order and $N$ when $N >2$. Also, the best UAV selection outperforms random UAV selection, where one can observe that the SOP and the ASR of random UAV selection are similar to the case with a single UAV. This highlights the importance of UAV selection in achieving better secrecy performance.

\begin{figure}
    \centering{\includegraphics[scale=0.4]{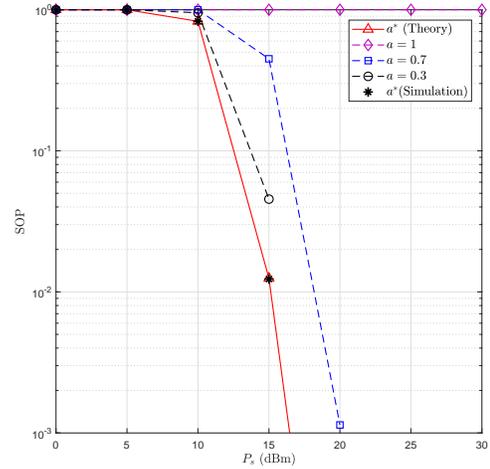}} \caption{SOP vs. $P_{s}$ in an urban environment for SBJ with different power allocation schemes when $N=2,$ and $m=2$.} \vspace{-0.2cm}  \label{Fig:sopPo}
    \end{figure}

Next, Fig. \ref{Fig:sopPo} illustrates the SOP performance of SBJ schemes for different power allocation ratios. As predicted through the analytical results, the system is always in outage when power is not allocated to the jamming signal ($a=1$). Furthermore, the performance gain achieved with the optimal power allocation compared to a heuristic power allocation scheme can be clearly identified.

\begin{figure}
\centering{\includegraphics[scale=0.4]{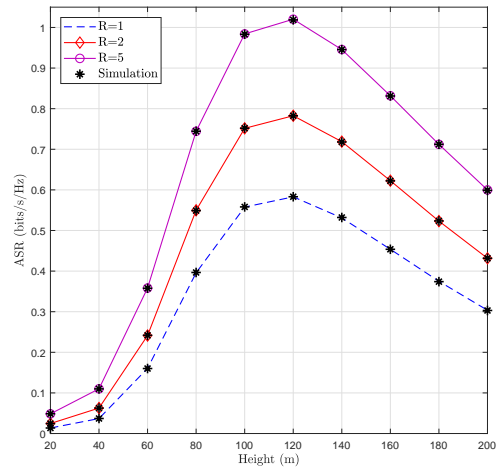}} \caption{ASR vs. $H_i$ for $R$ UAVs,  when $P_{s}=20$dBm, $N=2,$ and $m=2$, in an urban environment. } \vspace{-0.2cm}  \label{Fig:asrEH}
\end{figure}

Furthermore, Fig. \ref{Fig:asrEH} presents the ASR behaviour with the UAV altitude for different $R$ values. It is assumed that the set of relays are placed close to each other such that the link lengths remain almost the same for all UAVs. At first, ASR improves with the UAV altitude since the probability of a LoS link increases with $H_i$. However, one can identify the existence of an optimal altitude which maximizes the ASR, after which the ASR decreases due to the dominant effect of increased path loss. Therefore, it is important to choose a proper UAV altitude to achieve a favourable ASR for the system.

\begin{figure}
\centering{\includegraphics[scale=0.4]{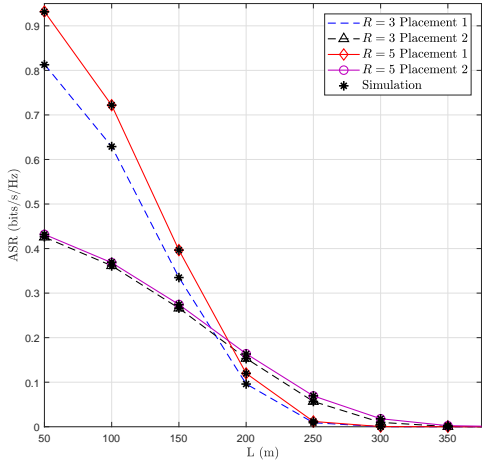}} \caption{ASR vs $D$ for different placement schemes, when $P_{s}=15$ dBm, $N=2,$ and $m=2$, in an urban environment.} \vspace{-0.2cm}  \label{Fig:asrRD_DR}
\end{figure}

Finally, Fig. \ref{Fig:asrRD_DR} shows the behaviour of the ASR with $L$, the distance between $\mathcal{S}$ and $\mathcal{D}$, for two UAV placement schemes. In Placement 1, UAVs are placed at the same altitude such that the projections of the UAVs on the ground plane equally divide the distance between $\mathcal{S}$ and $\mathcal{D}$. Placement 2 refers to the scenario where UAVs are placed at different altitudes such that their projections on the ground plane bisect the line between $\mathcal{S}$ and $\mathcal{D}$. The results show that Placement 1 is preferred for small values of $L$, while Placement 2 is preferred for larger values of $L$. When $L$ is small, there is a higher chance of finding an LoS link with a relay, and hence, we prefer the relay to be as close as possible to $\mathcal{S}$ and $\mathcal{D}$ to minimize the effect of path loss. This leads to Placement 1 being better for smaller values of $L$. On the other hand, when $L$ increases, increasing the altitude of the relay increases the possibility of connection establishment through an LoS link. This leads to Placement 2 being better for larger values of $L$.

\section{Conclusion}\label{s_con}
The secrecy outage probability and the average secrecy rate of UAV relay assisted communication were investigated. Source based jamming with optimal power allocation among the useful signal and the jamming signal was used to achieve secure communications through untrusted UAV relays. The best UAV selection scheme was proposed to improve the secrecy performance of the network. The results demonstrate the achievable secrecy performance gain with optimal power allocation and relay selection schemes, providing a diversity order of $2Rm$ for the system. Furthermore, useful network design insights such as locations and altitudes to place UAV relays were inferred from the numerical results.

\end{document}